\newcommand{\der}{\mathrm{d}}
\newcommand\bigforall{\mbox{\Large $\mathsurround=1pt\forall$}}
\def\<{\langle}
\def\>{\rangle}
\newcommand{\Tr}{\mathrm{Tr}}
\def\oper{{\mathchoice{\rm 1\mskip-4mu l}{\rm 1\mskip-4mu l}
{\rm 1\mskip-4.5mu l}{\rm 1\mskip-5mu l}}}
\DeclareMathAlphabet\mathbfcal{OMS}{cmsy}{b}{n}
\newtheorem{Theorem}{Theorem}
\newtheorem{Lemma}{Lemma}
\newtheorem{Remark}{Remark}
\newtheorem{Proposition}{Proposition}
\begin{document}

\title{Non-Markovian quantum dynamics from symmetric measurements}

\author{Katarzyna Siudzi{\'n}ska}
\affiliation{Institute of Physics, Faculty of Physics, Astronomy and Informatics \\  Nicolaus Copernicus University, Grudzi\k{a}dzka 5/7, 87--100 Toru{\'n}, Poland}

\begin{abstract}
We use symmetric measurement operators to construct quantum channels that provide a further generalization of generalized Pauli channels. The resulting maps are bistochastic but in general no longer mixed unitary. We analyze their important properties, such as complete positivity and the ability to break quantum entanglement. In the main part, we consider the corresponding open quantum systems dynamics with time-local generators. From divisibility properties of dynamical maps, we derive sufficient Markovianity and non-Markovianity conditions. As instructive examples, we present the generators of P-divisible generalized Pauli dynamical maps that allow for more negativity in the decoherence rates.
\end{abstract}

\flushbottom

\maketitle

\thispagestyle{empty}

\section{Introduction}

Non-Markovian quantum evolution finds important applications in quantum information processing and
quantum communication, where it becomes crucial to include the effects of memory \cite{BLPV,Alonso}. With a rapid development of experimental methods, it becomes possible to detect \cite{Bernardes,WuHou,Walborn}, quantify \cite{Balthazar}, and control \cite{LiuHuang,YuWang} even weak effects caused by the system-environment interactions. Open quantum systems with dissipative dynamics are simulated using Noisy Intermediate Scale Quantum (NISQ) devices \cite{Huggins,Monroe,Preskill}. Examples include decoherence channels on superconducting quantum computers \cite{GarciaPerez}, photosynthetic systems with dephasing noises \cite{WangNISQ}, photosynthetic light harvesting \cite{Croce}, and electron transfer in organic materials \cite{Rafiq}.

Markovianity of open quantum system evolutions is encoded in the properties of dynamical maps $\{\Lambda(t):\,t\geq 0,\,\Lambda(0)=\oper\}$, where $\Lambda(t)$ are completely positive, trace-preserving maps (quantum channels). It is said that $\Lambda(t)$ is divisible if there exists a trace-preserving propagator $V(t,s)$ such that $\Lambda(t)=V(t,s)\Lambda(s)$ at any times $t\geq s\geq 0$. Completely positive $V(t,s)$ correspond to a Markovian evolution with CP-divisible $\Lambda(t)$ \cite{RHP,Wolf}. Therefore, if $V(t,s)$ is not completely positive, then the evolution is non-Markovian. Further characterization is provided by a hierarchy of $k$-positive $V(t,s)$ that give rise to $k$-divisible dynamical maps \cite{Sabrina}. The smallest value of $k=1$ recovers P-divisible $\Lambda(t)$, associated with weakly non-Markovian evolution. For invertible dynamical maps, the propagator $V(t,s)=\Lambda(t)\Lambda^{-1}(s)$ is always well defined, and the CP-divisibility property is equivalent to the positivity of the decoherence rates $J_\alpha(t)$ entering the time-local generator in the Gorini-Kossakowski-Sudarshan-Lindblad (GKSL) form \cite{GKS,L}
\begin{equation}
\mathcal{L}(t)[\rho]=-i[H(t),\rho]+\sum_\alpha J_\alpha(t)\Bigg(V_\alpha(t)\rho V_\alpha^\dagger(t)-\frac 12\{V_\alpha^\dagger(t)V_\alpha(t),\rho\}\Bigg).
\end{equation}

One family of dynamical maps whose divisibility properties have been well characterized are Nathanson and Ruskai's generalized Pauli channels \cite{Ruskai}. They describe qudit depolarization that depends on the choice of axis, which is defined via mutually unbiased bases \cite{Wootters}. Due to their symmetry properties, these channels  find important applications in quantum information theory, quantum communication, and open quantum systems. They have been extensively analyzed in terms of their geometrical properties \cite{invertibility_measure,GGPC_volume}, classical capacity \cite{Holevo_capacity,Engineering_capacity}, parameter estimation \cite{Shin2}, and channel fidelity \cite{norms,fidelity}. It has been shown that the generalized Pauli channels define time-evolution with numerous negative decoherence rates \cite{ICQC,CCMS}. Moreover, classical mixtures of non-invertible generalized Pauli channels are known to produce a Markovian semigroup \cite{CCMK,Noninvertibility}.

In this paper, we consider qudit evolution that goes beyond the class established by the generalized Pauli channels. We start by introducing two families of quantum channels constructed from symmetric measurements \cite{SIC-MUB}. Just like the generalized Pauli channels, these maps are highly symmetric and therefore have the potential to be widely applicable. Their clear advantages over the generalized Pauli channels are that (i) the channels from symmetric measurements can be constructed in any dimension $d$ and (ii) there exist multiple families of quantum channels in any given dimension. This is a direct result of the fact that there exist at least four families of symmetric measurements in any dimension $d>2$. Up to this point, symmetric measurements have been gaining significant attention in entanglement detection, where they are used in the formulation of improved separability criteria \cite{SICMUB_App,SICMUB_App2,Lai,SICMUB_App4}, optimal local entanglement detection \cite{Alber}, and construction of bound entanglement witnesses \cite{SICMUB_Pmaps}. They have also been implemented to characterize entropic uncertainty relations \cite{SICMUB_entropic,SICMUB_entropic2}, average coherence \cite{SICMUB_App3}, steerability \cite{Alber2}, the Brukner-Zeilinger invariants \cite{SICMUB_BZ}, as well as optimal state estimation \cite{SICMUB_design}. Recently, more restrictive conditions for the existence of optimal symmetric measurements have been presented \cite{Alber3}.

In the following sections, we first recall the definition and properties of symmetric measurements. Next, we use these measurement operators to define two classes of quantum channels via mixtures of entanglement breaking channels. We derive conditions for complete positivity and entanglement breaking, and show how they relate to conical 2-designs. For the associated dynamical maps, we formulate sufficient P and CP-divisibility conditions, which allows us to provide non-Markovianity criteria that depend on the coefficients that characterize symmetric measurements. Special attention is given to D-divisibility, which interpolates between P- and CP-divisibility \cite{D-div}. Finally, we present illustrative examples by deriving non-Markovianity criteria for the channels constructed from mutually unbiased bases, mutually unbiased measurements, and rank-2 projective measurement operators. It turns out that completely copositive maps allow for detecting more P-divisible dynamical maps than sufficient P-divisibility conditions.

\section{Symmetric measurements}

Every quantum measurement can be represented via a positive, operator-valued measure (POVM), which consist in positive operators $E_\alpha$ summing up to the identity operator $\mathbb{I}_d$, where $d=\mathrm{dim}\mathcal{H}$. The probability associated with a POVM element $E_\alpha$ for a density operator $\rho$ is then $p_\alpha=\Tr(E_\alpha\rho)$. Due to having a simple mathematical structure, measurements with symmetry properties are of significant interest.

One way of imposing symmetry conditions onto the measurement operators has recently been presented in ref. \cite{SIC-MUB}, which introduces the notion of symmetric measurements. Namely, a symmetric measurement, or $(N,M)$-POVM, is a collection of $N$ POVMs $\{E_{\alpha,k};\,k=1,\ldots,M\}$, $\alpha=1,\ldots,N$, provided that the following symmetry constraints are satisfied,
\begin{equation}\label{M}
\begin{split}
\Tr (E_{\alpha,k})&=\frac dM,\\
\Tr (E_{\alpha,k}^2)&=x,\\
\Tr (E_{\alpha,k}E_{\alpha,\ell})&=\frac{d-Mx}{M(M-1)}\equiv y,\qquad \ell\neq k,\\
\Tr (E_{\alpha,k}E_{\beta,\ell})&=\frac{d}{M^2}\equiv z,\qquad \beta\neq\alpha,
\end{split}
\end{equation}
where $x$ is its characterizing parameter from the range
\begin{equation}\label{x}
\frac{d}{M^2}<x\leq\min\left\{\frac{d^2}{M^2},\frac{d}{M}\right\}.
\end{equation}
Projective measurements are recovered for $x=d^2/M^2$ and $M\geq d$.
An $(N,M)$-POVM is informationally complete if and only if $N=\frac{d^2-1}{M-1}$. For $d=2$, this leads to only two valid choices of $M$ and $N$, which correspond to the general SIC POVMs $(N=1,M=4)$ \cite{Gour} and the mutually unbiased measurements $(N=3,M=2)$ \cite{Kalev}. However, for any higher dimension $d$, there are at least four classes of informationally complete $(N,M)$-POVMs:
\begin{enumerate}[label=(\roman*)]
\item $M=d^2$ and $N=1$ (general SIC POVMs),
\item $M=d$ and $N=d+1$ (MUMs),
\item $M=2$ and $N=d^2-1$,
\item $M=d+2$ and $N=d-1$.
\end{enumerate}

Importantly, informationally complete $(N,M)$-POVMs can be constructed in any finite dimension using orthonormal Hermitian operator bases $\{G_0=\mathbb{I}_d/\sqrt{d},G_{\alpha,k};\,\alpha=1,\ldots,N,\,k=1,\ldots,M-1\}$ such that $\Tr G_{\alpha,k}=0$. The measurement operators are given by
\begin{equation}\label{E}
E_{\alpha,k}=\frac 1M \mathbb{I}_d+tH_{\alpha,k},
\end{equation}
where
\begin{equation}\label{H}
H_{\alpha,k}=\left\{\begin{aligned}
&G_\alpha-\sqrt{M}(\sqrt{M}+1)G_{\alpha,k},\quad k=1,\ldots,M-1,\\
&(\sqrt{M}+1)G_\alpha,\qquad k=M,
\end{aligned}\right.
\end{equation}
and $G_\alpha=\sum_{k=1}^{M-1}G_{\alpha,k}$. The new parameter $t$ is directly related to $x$ through the equation
\begin{equation}\label{xt}
x=\frac{d}{M^2}+t^2(M-1)(\sqrt{M}+1)^2,
\end{equation}
and its admissible range depends on the choice of $G_{\alpha,k}$.

\section{Quantum channels from measurement operators}

Among quantum channels, there exists a class that is simulable via classical channels. It is realized as follows: the sender performs measurements via a POVM $\{F_k\}$ on a state $\rho$, sends the outcomes $k$ via a classical channel, and the receiver prepares the corresponding states $R_k$. The resulting quantum channel has the Holevo form \cite{EBC}
\begin{equation}
\Phi[\rho]=\sum_kR_k\Tr(F_k\rho).
\end{equation}
From any given $(N,M)$-POVM, we can in this way construct a family of completely positive, trace-preserving maps
\begin{equation}
\Phi_\alpha[X]=\frac{M}{d}\sum_{k=1}^ME_{\alpha,k}\Tr(XE_{\alpha,k}),
\end{equation}
where $\overline{E}_{\alpha,k}$ denotes the complex conjugation of ${E}_{\alpha,k}$. Obviously, these maps have the Holevo form and therefore are entanglement breaking. Now, $\Phi_\alpha$ obey the following composition relations for any $\beta\neq\alpha$,
\begin{equation}
\Phi_\alpha\Phi_\alpha=
\frac Md \Big[(x-y)\Phi_\alpha+My\Phi_0\Big],\qquad
\Phi_\alpha\Phi_\beta=\Phi_0,
\end{equation}
where $\Phi_0[X]=\mathbb{I}_d\Tr(X)/d$ is the maximally depolarizing channel. Additionally, for an informationally complete $(N,M)$-POVM,
\begin{equation}\label{suma}
\sum_{\alpha=1}^N\Phi_\alpha=\frac Md (\kappa_+d\Phi_0+\kappa_-\oper)
\end{equation}
with the coefficients
\begin{equation}
\kappa_+=\frac{M}{d}[y+(N-1)z]=\frac{d^3-xM^2}{dM(M-1)},\qquad \kappa_-=x-y=\frac{xM^2-d}{M(M-1)}
\end{equation}
that appear in the formula for conical 2-designs \cite{SICMUB_design} (see Appendix A). Observe that $\Phi_\alpha$ admits the eigenvalue equations
\begin{equation}
\Phi_\alpha[U_{\beta,\ell}]=
\frac Md (x-y)\delta_{\alpha\beta}U_{\beta,\ell},
\end{equation}
where the eigenvectors
\begin{equation}
U_{\alpha,k}=\sum_{\ell=1}^M\omega^{k\ell}E_{\alpha,\ell},\qquad \omega=\exp\left(\frac{2\pi i}{M}\right),\qquad k=1,\ldots,M-1,
\end{equation}
are orthogonal operators satisfying the trace relation
\begin{equation}
\Tr(U_{\alpha,k}U_{\beta,\ell}^\dagger)=M(x-y)\delta_{\alpha\beta}\delta_{k\ell}.
\end{equation}
Notably, $U_{\alpha,k}$ are unitary if $E_{\alpha,k}$ are projective measurements.

Now, let us introduce the quantum channel that is a classical mixture of entanglement breaking channels $\Phi_\alpha$ and the identity; namely,
\begin{equation}\label{lambda}
\Lambda=\frac{dp_0-1}{d-1}\oper+\frac{d}{d-1}\sum_{\alpha=1}^Np_\alpha\Phi_\alpha,
\end{equation}
where $p_0\geq 1/d$, $p_\alpha\geq 0$, and $\sum_{\alpha=0}^Np_\alpha=1$.
If $E_{\alpha,k}$ are mutually unbiased measurements, this construction reproduces the generalization of Pauli channels from ref. \cite{MUM_GPC}. For mutually unbiased bases, one recovers the generalized Pauli channels \cite{Ruskai,Ohno}.
Observe that $\Lambda$ is a unital map ($\Lambda[\mathbb{I}_d]=\mathbb{I}_d$) and commutes with all $\Phi_\alpha$. Its eigenvalue equations
\begin{equation}
\Lambda[U_{\alpha,k}]=\lambda_\alpha U_{\alpha,k}
\end{equation}
show that this map has real, $(M-1)$-times degenerated eigenvalues
\begin{equation}
\lambda_\alpha=\frac{1}{d-1}[dp_0+M(x-y)p_\alpha-1].
\end{equation}
From the inverse relation,
\begin{equation}
\begin{split}
p_0&=\frac{1}{dM\kappa_+}\Big[(d-1)\sum_{\alpha=1}^N\lambda_\alpha+N-M(x-y)\Big],\\
p_\alpha&=\frac{d-1}{M(x-y)}\left[\lambda_\alpha-\frac{d\sum_{\beta=1}^N\lambda_\beta-M(x-y)}{dM\kappa_+}\right],
\end{split}
\end{equation}
one easily derives the sufficient complete positivity conditions
\begin{equation}\label{CPT}
\frac{\kappa_-}{d}\leq\frac{1}{M}\sum_{\alpha=1}^N\lambda_\alpha\leq
\frac{\kappa_-}{d}+\kappa_+\min_\alpha\lambda_\alpha.
\end{equation}
Notably, this implies that $\lambda_\alpha\geq 0$.


An alternative construction can be performed by taking a classical mixture
\begin{equation}\label{lambda2}
\widetilde{\Lambda}=\frac{dq_0-1}{d-1}\oper+\frac{d}{d-1}\sum_{\alpha=1}^Nq_\alpha\Psi_\alpha
\end{equation}
over another family of entanglement breaking channels
\begin{equation}
\Psi_\alpha=\frac{1}{M-1}(M\Phi_0-\Phi_\alpha).
\end{equation}
Note that $\Psi_\alpha$ indeed have the Holevo form as
\begin{equation}
\Psi_\alpha[X]=\sum_{k=1}^MR_{\alpha,k}\Tr(M_{\alpha,k}X),
\end{equation}
where $R_{\alpha,k}=\frac Md E_{\alpha,k}$ are density operators and $M_{\alpha,k}=\frac{1}{M-1}(\mathbb{I}_d-E_{\alpha,k})$ form a POVM. 
It is easy to show that
\begin{equation}
\Psi_\alpha[U_{\beta,\ell}]=
-\frac{M(x-y)}{d(M-1)}\delta_{\alpha\beta}U_{\beta,\ell},
\end{equation}
and hence the eigenvalues of $\widetilde{\Lambda}$ read
\begin{equation}
\widetilde{\Lambda}[U_{\alpha,k}]=\widetilde{\lambda}_\alpha U_{\alpha,k},
\qquad \widetilde{\lambda}_\alpha=\frac{1}{d-1}\left[dq_0-1-\frac{M(x-y)}{M-1}q_\alpha\right].
\end{equation}
The probability distribution can be expressed in terms of the channel eigenvalues in the following way,
\begin{align}
q_0&=\frac{(d-1)(M-1)\sum_{\alpha=1}^N\widetilde{\lambda}_\alpha+d^2-1+M(x-y)}{d(d^2-1)+M(x-y)},\\
q_\alpha&=\frac{(d-1)(M-1)}{M(x-y)}\left[\frac{d(M-1)\sum_{\beta=1}^N\widetilde{\lambda}_\beta+M(x-y)}
{d(d^2-1)+M(x-y)}-\widetilde{\lambda}_\alpha\right],
\end{align}
and hence $\widetilde{\Lambda}$ is completely positive if
\begin{equation}\label{CPT2}
-\frac{M(x-y)}{d(M-1)}\leq\sum_{\alpha=1}^N\widetilde{\lambda}_\alpha\leq N,\qquad
\max_\alpha\widetilde{\lambda}_\alpha\leq\frac{d(M-1)\sum_{\beta=1}^N\widetilde{\lambda}_\beta+M(x-y)}
{d(d^2-1)+M(x-y)}.
\end{equation}
Contrary to the construction from $\Phi_\alpha$, the channel $\widetilde{\Lambda}$ allows for negative eigenvalues. For the sake of clarity, let us point out that $\widetilde{\Lambda}$ can be represented in the same form as $\Lambda$, the main difference being that the coefficients multiplying $\oper$ and $\Phi_\alpha$ are no longer a probability distribution.


Even though our construction uses entanglement breaking channels, the resulting map is in general no longer entanglement breaking. This is the case due to mixing with the identity operator. However, it is still possible to derive sufficient conditions for breaking quantum entanglement. 

\begin{Proposition}
A quantum channel constructed from symmetric measurements is entanglement breaking if
\begin{equation}
\left\{\begin{split}
&\lambda_\alpha\geq 0,\\
&\sum_{\alpha=1}^{d+1}\lambda_\alpha\leq \frac{M(x-y)}{d}
\end{split}\right.
\qquad\mathrm{or}\qquad
\left\{\begin{split}
&\lambda_\alpha\leq 0,\\
&\sum_{\alpha=1}^{d+1}\left|\lambda_\alpha\right|\leq\frac{M(x-y)}{d(M-1)}.
\end{split}\right.
\end{equation}
\end{Proposition}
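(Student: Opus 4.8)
The plan is to realize $\Lambda$ as a genuine convex combination of channels already known to be entanglement breaking, and then invoke convexity of the set of entanglement breaking maps. The key preliminary remark is that, for an informationally complete $(N,M)$-POVM, the $d^2$ operators $\{\mathbb{I}_d\}\cup\{U_{\alpha,k}:\alpha=1,\dots,N,\ k=1,\dots,M-1\}$ are mutually orthogonal in the Hilbert--Schmidt product (this is the stated trace relation $\Tr(U_{\alpha,k}U_{\beta,\ell}^\dagger)=M(x-y)\delta_{\alpha\beta}\delta_{k\ell}$ together with $\Tr U_{\alpha,k}=0$) and hence form a basis of $M_d(\mathbb{C})$. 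Consequently every unital map diagonal in this basis — in particular $\Lambda$, $\Phi_0$, all $\Phi_\alpha$, and all $\Psi_\alpha$ — is completely determined by its eigenvalues on the $U_{\alpha,k}$, namely $\lambda_\alpha$, $0$, $\frac Md(x-y)\delta_{\alpha\beta}$, and $-\frac{M(x-y)}{d(M-1)}\delta_{\alpha\beta}$ respectively. Note also that $x-y=\kappa_-=\frac{xM^2-d}{M(M-1)}>0$ by the range \eqref{x} of $x$; this sign is exactly what makes the two branches of the proposition come out as stated.

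For the first branch, set $\mu_\alpha:=\dfrac{d\,\lambda_\alpha}{M(x-y)}$ for $\alpha=1,\dots,N$ and $\mu_0:=1-\sum_{\alpha=1}^N\mu_\alpha$. The hypothesis $\lambda_\alpha\geq 0$ gives $\mu_\alpha\geq 0$, while $\sum_\alpha\lambda_\alpha\leq \frac{M(x-y)}{d}$ is precisely $\mu_0\geq 0$, so $(\mu_0,\mu_1,\dots,\mu_N)$ is a probability distribution. Comparing eigenvalues on the basis above, the map $\mu_0\Phi_0+\sum_{\alpha=1}^N\mu_\alpha\Phi_\alpha$ fixes $\mathbb{I}_d$ and multiplies each $U_{\alpha,k}$ by $\mu_\alpha\frac Md(x-y)=\lambda_\alpha$, hence it equals $\Lambda$. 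Since $\Phi_0$ and every $\Phi_\alpha$ are of Holevo form they are entanglement breaking, and a convex combination of entanglement breaking channels is entanglement breaking; therefore $\Lambda$ is entanglement breaking.

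For the second branch I repeat the argument with the family $\Psi_\alpha=\frac{1}{M-1}(M\Phi_0-\Phi_\alpha)$, which is likewise entanglement breaking (it has the Holevo form exhibited above) and unital. Put $\nu_\alpha:=\dfrac{d(M-1)\,|\lambda_\alpha|}{M(x-y)}$ and $\nu_0:=1-\sum_{\alpha=1}^N\nu_\alpha$. Because $\lambda_\alpha\leq 0$, the minus sign in the eigenvalue of $\Psi_\alpha$ makes $\Lambda$ and $\nu_0\Phi_0+\sum_{\alpha=1}^N\nu_\alpha\Psi_\alpha$ agree on each $U_{\alpha,k}$ (eigenvalue $-\nu_\alpha\frac{M(x-y)}{d(M-1)}=\lambda_\alpha$), with $\nu_\alpha\geq 0$; and $\nu_0\geq 0$ is exactly $\sum_\alpha|\lambda_\alpha|\leq \frac{M(x-y)}{d(M-1)}$. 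Thus $\Lambda$ is again a convex combination of entanglement breaking channels, hence entanglement breaking. (That the native representation \eqref{lambda} of $\Lambda$ may involve negative $p_\alpha$ in this regime is irrelevant: we have simply produced another, manifestly entanglement breaking, representation.) The only step that requires genuine care is the orthogonality/basis statement together with the strict inequality $x>y$ — once these are in place, the rest is the bookkeeping above; I do not expect any further obstacle.
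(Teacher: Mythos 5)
Your proof is correct and follows essentially the same route as the paper: both rewrite the channel as a convex combination of $\Phi_0$ with the $\Phi_\alpha$ (first branch) or the $\Psi_\alpha$ (second branch) and invoke convexity of the set of entanglement breaking maps, with your two conditions being exactly the nonnegativity of the mixing weights. The only difference is that you explicitly justify the identity of the two representations by comparing eigenvalues on the orthogonal basis $\{\mathbb{I}_d, U_{\alpha,k}\}$, a step the paper asserts without comment.
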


\begin{proof}
Observe that the channels from eqs. (\ref{lambda}) and (\ref{lambda2}) are equivalently rewritten as
\begin{equation}
\Lambda=\left(1-\frac{d}{M(x-y)}\sum_{\alpha=1}^N\lambda_\alpha\right)\Phi_0
+\frac{d}{M(x-y)}\sum_{\alpha=1}^N\lambda_\alpha\Phi_\alpha,
\end{equation}
\begin{equation}
\widetilde{\Lambda}=\left(1+\frac{d(M-1)}{M(x-y)}\sum_{\alpha=1}^N\lambda_\alpha\right)\Phi_0
-\frac{d(M-1)}{M(x-y)}\sum_{\alpha=1}^N\lambda_\alpha\Psi_\alpha.
\end{equation}
A classical mixture of entanglement breaking channels is also entanglement breaking, and hence it is enough that the coefficients before $\Phi_0$, $\Phi_\alpha$, and $\Psi_\alpha$ are positive.
\end{proof}

\section{Non-Markovianity criteria}

A time-dependent description is realized via time-parameterized families of quantum channels $\{\Lambda(t);\,t\geq 0,\,\Lambda(0)=\oper\}$ known as dynamical maps. For the channel in eq. (\ref{lambda}), all the time dependence is encoded in the probability distribution $p_\alpha(t)$, so that the corresponding dynamical map reads
\begin{equation}\label{lambdat}
\Lambda(t)=\frac{dp_0(t)-1}{d-1}\oper+\frac{d}{d-1}\sum_{\alpha=1}^Np_\alpha(t)\Phi_\alpha.
\end{equation}
Quantum dynamics is often specified by master equations, which are the evolution equations for $\Lambda(t)$. Assume that $\Lambda(t)$ is the solution of the  master equation $\dot{\Lambda}(t)=\mathcal{L}(t)\Lambda(t)$ with the time-local  generator
\begin{equation}\label{La}
\mathcal{L}(t)[\rho]=\Phi(t)[\rho]
-\frac 12 \{\Phi^{\#}(t),\rho\},
\end{equation}
where $\Phi(t)=\sum_{\alpha=1}^N\gamma_\alpha(t)\Phi_\alpha$ and $\Phi^{\#}(t)$ denotes a map dual to $\Phi(t)$. Equivalently, $\mathcal{L}(t)$ can be written as
\begin{equation}\label{gen}
\mathcal{L}(t)=\sum_{\alpha=1}^N\gamma_\alpha(t)(\Phi_\alpha-\oper).
\end{equation}
The generator satisfies the following eigenvalue equations,
\begin{equation}
\mathcal{L}(t)[U_{\alpha,k}]=\xi_\alpha(t)U_{\alpha,k},\qquad
\xi_\alpha(t)=\frac Md (x-y)\gamma_\alpha(t)
-\sum_{\beta=1}^N\gamma_\beta(t),
\end{equation}
and hence $\Lambda(t)$ has time-dependent eigenvalues
\begin{equation}
\lambda_\alpha(t)=\exp\left[\int_0^t\xi_\alpha(\tau)\der\tau\right].
\end{equation}
Now, if $\mathcal{L}(t)$ is constructed from a completely positive map $\Phi(t)$, then the dynamical map $\Lambda(t)$ is CP-divisible. This corresponds to the Markovian evolution. Note that, depending on the choice of symmetric measurements, the form in eq. (\ref{gen}) may not de diagonal. Therefore, in general, $\gamma_\alpha(t)\geq 0$ are not the necessary and sufficient  CP-divisibility conditions.

\begin{Proposition}
If the time-local generator $\mathcal{L}(t)$ possesses the coefficients $\gamma_\alpha(t)$ such that
\begin{equation}\label{CPCOND}
\gamma_\alpha(t)\geq 0\qquad{\rm{or}}\qquad \sum_{\beta=1}^N\gamma_\beta(t)
-(N-\kappa_+)\gamma_\alpha(t)\geq 0,
\end{equation}
then $\Lambda(t)$ is a CP-divisible dynamical map.
\end{Proposition}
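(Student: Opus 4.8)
The plan is to show that, in each of the two cases, the time-local generator $\mathcal{L}(t)$ can for every $t$ be cast in the GKSL form (\ref{La}) with a completely positive map $\Phi(t)$; by the observation made just before the proposition this is precisely what yields CP-divisibility of $\Lambda(t)$. I would organise the argument around a single elementary remark: for any quantum channel $\Psi$ the generator $\Psi-\oper$ is of GKSL form (write $\Psi[\rho]=\sum_kK_k\rho K_k^\dagger$ with $\sum_kK_k^\dagger K_k=\mathbb{I}_d$), and so is any nonnegative combination $\sum_jw_j(\Psi_j-\oper)$ with $w_j\geq 0$: taking $\Phi(t)=\sum_jw_j\Psi_j$ one has $\Phi(t)$ completely positive and $\Phi^{\#}(t)[\mathbb{I}_d]=(\sum_jw_j)\mathbb{I}_d$, so $\sum_jw_j(\Psi_j-\oper)[\rho]=\Phi(t)[\rho]-\tfrac12\{\Phi^{\#}(t)[\mathbb{I}_d],\rho\}$. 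Hence it suffices to display $\mathcal{L}(t)$ as a nonnegative combination of generators $\Psi_j-\oper$ with $\Psi_j$ quantum channels.

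When $\gamma_\alpha(t)\geq 0$ for all $\alpha$ there is nothing further to do: $\mathcal{L}(t)=\sum_{\alpha=1}^N\gamma_\alpha(t)(\Phi_\alpha-\oper)$ is already of this form, since each $\Phi_\alpha$ is an entanglement breaking channel.

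In the second case I would first record two operator identities. Using eq. (\ref{suma}) together with informational completeness, $N=(d^2-1)/(M-1)$, one checks $M\kappa_++M(x-y)/d=N$, so the term proportional to $\oper$ on the left-hand side of (\ref{suma}) collapses and $\sum_{\alpha=1}^N(\Phi_\alpha-\oper)=M\kappa_+(\Phi_0-\oper)$; from $\Psi_\alpha=\tfrac1{M-1}(M\Phi_0-\Phi_\alpha)$ one gets $\Phi_\alpha-\oper=M(\Phi_0-\oper)-(M-1)(\Psi_\alpha-\oper)$. Writing $\gamma^*(t)=\max_\alpha\gamma_\alpha(t)$ and $\gamma_\alpha(t)=\gamma^*(t)-(\gamma^*(t)-\gamma_\alpha(t))$ in $\mathcal{L}(t)=\sum_\alpha\gamma_\alpha(t)(\Phi_\alpha-\oper)$ and substituting the two identities, a short rearrangement gives
\[
\mathcal{L}(t)=M\Bigl[\sum_{\beta=1}^N\gamma_\beta(t)-(N-\kappa_+)\gamma^*(t)\Bigr](\Phi_0-\oper)+(M-1)\sum_{\beta=1}^N\bigl(\gamma^*(t)-\gamma_\beta(t)\bigr)(\Psi_\beta-\oper).
\]
The coefficient of $\Phi_0-\oper$ is $M$ times the left-hand side of the second inequality in (\ref{CPCOND}) evaluated at an index attaining the maximum, hence $\geq 0$ by hypothesis; the coefficients $\gamma^*(t)-\gamma_\beta(t)$ are $\geq 0$ by definition of $\gamma^*(t)$, and $M-1>0$. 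Since $\Phi_0$ is the maximally depolarizing channel and the $\Psi_\beta$ have Holevo form, the maps $\Phi_0,\Psi_1,\dots,\Psi_N$ are quantum channels, so the displayed formula exhibits $\mathcal{L}(t)$ as a nonnegative combination of GKSL generators. As this holds for every $t$, $\Lambda(t)$ is CP-divisible.

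The two operator identities and the recombination of coefficients are routine; the step I expect to carry the weight is finding the rearrangement in the second case — peeling off the ``flat'' part $\gamma^*(t)\sum_\alpha(\Phi_\alpha-\oper)$ and then using the auxiliary entanglement breaking channels $\Psi_\beta$ to reabsorb the leftover non-positive weights $\gamma_\beta(t)-\gamma^*(t)$ into admissible dissipators. Without the second family $\Psi_\beta$ this rewriting is not possible, which is exactly why the second sufficient condition in (\ref{CPCOND}) is available here but not for ordinary generalized Pauli channels.
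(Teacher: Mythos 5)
Your proof is correct and takes essentially the same route as the paper's Appendix B: in both arguments the first condition is immediate from complete positivity of $\sum_\alpha\gamma_\alpha(t)\Phi_\alpha$, and the second is handled by re-expressing $\mathcal{L}(t)$ as a GKSL generator built from nonnegative combinations of the auxiliary entanglement-breaking channels $\Psi_\alpha$ (the paper's ``gauge transformation'' is exactly your implicit absorption of the identity-proportional terms into the $\Psi-\oper$ dissipators). Indeed, substituting $\Phi_0-\oper=\tfrac{M-1}{M(N-\kappa_+)}\sum_\alpha(\Psi_\alpha-\oper)$ into your decomposition reproduces the paper's $\sum_\alpha\zeta_\alpha(t)(\Psi_\alpha-\oper)$ with $\zeta_\alpha(t)=\tfrac{M-1}{N-\kappa_+}\bigl[\sum_\beta\gamma_\beta(t)-(N-\kappa_+)\gamma_\alpha(t)\bigr]$, so the two proofs coincide up to repackaging.
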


For the proof, see Appendix B. Note that both conditions are equivalent for the generators of the generalized Pauli channels, where $\kappa_+=1$ and $N>2$.

Among non-Markovian evolutions, special attention is given to quantum dynamics with P-divisible dynamical maps. Recall that an invertible dynamical map is P-divisible if and only if
\begin{equation}
\frac{\der}{\der t}\|\Lambda(t)[X]\|_1\leq 0
\end{equation}
for all operators $X$, where $\|X\|_1=\Tr\sqrt{X^\dagger X}$ is the trace norm of $X$ \cite{CKR}. In general, this condition is hard to verify. However, if one takes $X=U_\alpha$, then the necessary conditions for P-divisibility follow,
\begin{equation}
\xi_\alpha(t)\leq 0\qquad\Longleftrightarrow\qquad 
-\frac Md (x-y)\gamma_\alpha(t)+\sum_{\beta=1}^N\gamma_\beta(t)\geq 0.
\end{equation}
To derive sufficient conditions, we use the fact that if $\Lambda(t)$ is generated via $\mathcal{L}(t)$ defined via eq. (\ref{La}) with a positive map $\Phi(t)$, then $\Lambda(t)$ is P-divisible. As the positive map constructed from symmetric measurements, we take \cite{SICMUB_Pmaps}
\begin{equation}\label{Pmap}
\Phi=(b-N+2L)\mu_0\Phi_0+\sum_{\alpha=L+1}^N\mu_\alpha\Phi_\alpha
-\sum_{\alpha=1}^L\mu_\alpha\Phi_\alpha
\end{equation}
with $b=M(d-1)(x-y)/d$, as well as
\begin{enumerate}[label=(\roman*)]
\item $\mu_0\geq\max_{\beta=1,\ldots,L}\mu_\beta$ and $\mu_\alpha\geq\max_{\beta=1,\ldots,L}\mu_\beta$ for $\alpha=L+1,\ldots,N$ if $L\geq (N-b)/2$; or,
\item $\mu_0\leq\min_{\alpha=L+1,\ldots,N}\mu_\alpha$ and $\mu_\alpha\geq\max_{\beta=1,\ldots,L}\mu_\beta$ for $\alpha=L+1,\ldots,N$ if $L< (N-b)/2$.
\end{enumerate}

\begin{Proposition}
If at any time $t\geq 0$ the generator $\mathcal{L}(t)$ has
\begin{equation}\label{Lcond}
\frac{N-b}{2}<L<\frac{N-b+M\kappa_+}{2}
\end{equation}
negative coefficients $\gamma_\beta(t)$, $\beta=1,\ldots,L$, and $N-L$ positive coefficients $\gamma_\alpha(t)$, $\alpha=L+1,\ldots,N$, that satisfy
\begin{equation}\label{PCOND}
\bigforall_{\alpha=L+1,\ldots,N}\quad
\gamma_\alpha(t)\geq\frac{M\kappa_++(b-N+2L)}{M\kappa_+-(b-N+2L)}
\max_{\beta=1,\ldots,L}|\gamma_\beta(t)|,
\end{equation}
then $\Lambda(t)$ is a P-divisible dynamical map.

If at any time $t\geq 0$ the generator $\mathcal{L}(t)$ has $L\leq (N-b)/2$ 
negative coefficients $\gamma_\beta(t)$, $\beta=1,\ldots,L$, and $N-L$ positive coefficients $\gamma_\alpha(t)$, $\alpha=L+1,\ldots,N$, such that
\begin{equation}\label{PCOND2}
\bigforall_{\alpha=L+1,\ldots,N}\quad
\gamma_\alpha(t)\geq\max_{\beta=1,\ldots,L}|\gamma_\beta(t)|,
\end{equation}
then $\Lambda(t)$ is a P-divisible dynamical map.
\end{Proposition}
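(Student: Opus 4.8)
The plan is to reduce the statement to an application of the positive-map construction recalled just before the proposition, namely that if $\Phi(t)$ is a positive map then $\Lambda(t)$ generated via $\mathcal{L}(t)[\rho]=\Phi(t)[\rho]-\frac12\{\Phi^{\#}(t),\rho\}$ is P-divisible. Concretely, given coefficients $\gamma_\alpha(t)$ with $L$ of them negative (labelled $\alpha=1,\dots,L$) and $N-L$ positive (labelled $\alpha=L+1,\dots,N$), I would try to write the map $\Phi(t)=\sum_{\alpha=1}^N\gamma_\alpha(t)\Phi_\alpha$ in the form of eq.~(\ref{Pmap}), i.e. identify $\mu_\alpha$ with $|\gamma_\alpha(t)|$ for $\alpha\geq 1$, and then check that the number $L$ and the constants match the positivity hypotheses (i) or (ii) listed after eq.~(\ref{Pmap}). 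The subtlety is that eq.~(\ref{Pmap}) also carries a $\Phi_0$-term with coefficient $(b-N+2L)\mu_0$, which is absent from $\Phi(t)=\sum_\alpha\gamma_\alpha\Phi_\alpha$; so the first real step is to add and subtract a suitable multiple of $\Phi_0$, absorbing the discrepancy into the $-\frac12\{\Phi^{\#},\rho\}$ part of the generator. Since $\Phi_0$ is a genuine channel (hence CP, hence positive), adding $c\,\Phi_0$ to a positive map with $c\ge 0$ keeps it positive, and the generator built from $\Phi(t)+c\Phi_0$ differs from $\mathcal{L}(t)$ only by a CP-divisibility-preserving term; I would make this precise, perhaps noting that the extra piece is exactly of the depolarizing type already handled by Proposition~2.

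The second step is the bookkeeping that turns the hypotheses (\ref{Lcond})--(\ref{PCOND}) into the conditions (i) on the $\mu_\alpha$. Writing $\mu_0$ for the free parameter introduced by the added $\Phi_0$, the condition $L\ge (N-b)/2$ is guaranteed by the left inequality in (\ref{Lcond}); then one must check $\mu_0\ge\max_{\beta\le L}\mu_\beta$ and $\mu_\alpha\ge\max_{\beta\le L}\mu_\beta$ for $\alpha>L$. The latter is precisely what (\ref{PCOND}) encodes, because the prefactor $\frac{M\kappa_++(b-N+2L)}{M\kappa_+-(b-N+2L)}$ is positive exactly when $M\kappa_+>b-N+2L$, i.e. when $L<(N-b+M\kappa_+)/2$, which is the right inequality in (\ref{Lcond}); and the ratio is chosen so that $\gamma_\alpha(t)$ dominating it forces the needed ordering after the $\Phi_0$-shift. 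For the former ($\mu_0\ge\max_{\beta\le L}\mu_\beta$), one has freedom in choosing the size of the added depolarizing term, so I would simply take the $\Phi_0$ coefficient large enough; the only constraint is that the leftover generator term stay of admissible sign, which the chosen bounds secure. The second half of the proposition, with $L\le(N-b)/2$ and hypothesis (\ref{PCOND2}), is handled identically but invoking case (ii) after eq.~(\ref{Pmap}), where $\mu_0$ must instead be small ($\mu_0\le\min_{\alpha>L}\mu_\alpha$), and (\ref{PCOND2}) is exactly the unadorned domination $\gamma_\alpha\ge\max_\beta|\gamma_\beta|$ that case (ii) requires.

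I expect the main obstacle to be the $\Phi_0$-reconciliation: one has to verify that shifting $\Phi(t)\mapsto\Phi(t)+c(t)\Phi_0$ genuinely corresponds to a legitimate rewriting of the \emph{same} dynamical map $\Lambda(t)$ (not merely of the generator up to a constant), and that the residual term it produces in $\mathcal{L}(t)$ does not spoil P-divisibility. The cleanest way is to observe that $\mathcal{L}(t)=\sum_\alpha\gamma_\alpha(t)(\Phi_\alpha-\oper)$ and that adding $c(t)(\Phi_0-\oper)$ is itself the generator of a CP-divisible (indeed Markovian) depolarizing perturbation, whose composition with a P-divisible propagator is again P-divisible; alternatively, one checks directly that $\Phi(t)+c(t)\Phi_0$ being positive suffices for the modified generator to yield a P-divisible $\Lambda(t)$, and that the modification only changes $\lambda_\alpha(t)$ by an overall positive rescaling which is immaterial for the trace-norm monotonicity criterion. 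Once that reduction is in place, the rest is the routine matching of constants sketched above, using $\kappa_\pm$ and $b=M(d-1)(x-y)/d$ as defined in the text.
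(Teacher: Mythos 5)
Your overall strategy coincides with the paper's: reduce P-divisibility to positivity of the map $\Phi(t)$ entering the generator via eq.~(\ref{La}), and match $\sum_\alpha\gamma_\alpha(t)\Phi_\alpha$ to the positive map of eq.~(\ref{Pmap}) so that the hypotheses (i)/(ii) translate into eqs.~(\ref{Lcond})--(\ref{PCOND2}). Your bookkeeping in the second paragraph (the sign of the prefactor being governed by the right inequality of (\ref{Lcond}), the minimal admissible $\mu_0$ in case (i), $\mu_0=0$ in case (ii)) is exactly what Appendix~C does. However, your treatment of the $\Phi_0$ term --- the step you yourself flag as the main obstacle --- contains a genuine gap. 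Adding $c(t)\Phi_0$ to $\Phi(t)$ changes the generator by $c(t)(\Phi_0-\oper)$, which shifts every eigenvalue $\xi_\alpha(t)$ by $-c(t)$ and hence replaces $\Lambda(t)$ by a strictly more depolarized map $D(t)\Lambda(t)$. P-divisibility of $D(t)\Lambda(t)$ does not imply P-divisibility of $\Lambda(t)$: the extra contraction can mask a trace-norm increase of the original map, and undoing it requires composing with $D^{-1}$, which is not positive. So neither of your proposed fixes closes the argument, and in case (i) --- where $b-N+2L>0$ and $\mu_0>0$, so the $\Phi_0$ coefficient is genuinely nonzero --- you would end up proving the statement for the wrong dynamical map.

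The missing ingredient is the conical 2-design identity (\ref{suma}), $\sum_{\alpha=1}^N\Phi_\alpha=\frac{M}{d}(\kappa_+d\Phi_0+\kappa_-\oper)$, which shows that $\Phi_0$ is not an independent perturbation at all: substituting $\Phi_0=\frac{1}{M\kappa_+}\sum_\alpha\Phi_\alpha-\frac{\kappa_-}{d\kappa_+}\oper$ into eq.~(\ref{Pmap}) turns the $\Phi_0$ term into a uniform shift $\frac{\mu_0}{M\kappa_+}(b-N+2L)$ of every $\Phi_\alpha$-coefficient plus a multiple of $\oper$. The $\oper$ part is pure gauge --- it cancels identically between $\Phi(t)[\rho]$ and $-\frac12\{\Phi^{\#}(t)[\mathbb{I}],\rho\}$ in eq.~(\ref{La}) --- so the positive map (\ref{Pmap}) produces exactly the generator (\ref{gen}) with $\gamma_\beta=-\mu_\beta+\frac{\mu_0}{M\kappa_+}(b-N+2L)$ for $\beta\le L$ and $\gamma_\alpha=\mu_\alpha+\frac{\mu_0}{M\kappa_+}(b-N+2L)$ for $\alpha>L$, leaving no residual dynamical term to dispose of. Note in particular that the correct identification is therefore not $\mu_\alpha=|\gamma_\alpha(t)|$ but $|\gamma_\alpha(t)|$ offset by that common constant; once this is in place, your remaining computations (the choice of $\mu_0$, the positivity of the denominator $M\kappa_+-(b-N+2L)$, and the case $L\le(N-b)/2$ with $\mu_0=0$) go through as in Appendix~C.
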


The proofs are given in Appendix C. Notably, this approach does not allow for $L=N$.

Recently, a classification of dynamical maps has been introduced that interpolates between P and CP-divisibility. It is based on the notion of decomposable maps, which are positive maps of the form $\Phi(t)=\mathcal{A}(t)+T\mathcal{B}(t)$ with completely positive maps $\mathcal{A}(t)$, $\mathcal{B}(t)$ and transposition $T$. Generators constructed from such maps give rise to D-divisible (decomposable divisible) dynamical maps $\Lambda(t)$ \cite{D-div}. However, to determine whether a map is decomposable is a demanding task. For this reason, we consider a special subclass consisting of D-divisible $\Lambda(t)$ with corresponding $\Phi(t)$ that are completely copositive (completely positive after transposition).

\begin{Proposition}
A dynamical map $\Lambda(t)$ is D-divisible if the coefficients in its generator $\mathcal{L}(t)$ satisfy
\begin{equation}\label{coCPCOND}
\sum_{\alpha=1}^N\gamma_\alpha(t)
\sum_{k=1}^ME_{\alpha,k}\otimes E_{\alpha,k}\geq 0.
\end{equation}
\end{Proposition}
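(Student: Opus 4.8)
The plan is to recognise condition~(\ref{coCPCOND}) as the statement that the map $\Phi(t)=\sum_{\alpha=1}^{N}\gamma_\alpha(t)\,\Phi_\alpha$ entering the generator~(\ref{La})--(\ref{gen}) is completely copositive, and then to invoke the D-divisibility criterion of ref.~\cite{D-div}. Recall that, just as a completely positive $\Phi(t)$ forces $\Lambda(t)$ to be CP-divisible, a \emph{decomposable} $\Phi(t)=\mathcal{A}(t)+T\mathcal{B}(t)$ (with $\mathcal{A}(t),\mathcal{B}(t)$ completely positive and $T$ the transposition) forces $\Lambda(t)$ to be D-divisible. A completely copositive map --- one for which $T\circ\Phi(t)$ is completely positive --- is in particular decomposable, since $\Phi(t)=T\circ\bigl(T\circ\Phi(t)\bigr)$ is then of the required form with $\mathcal{A}(t)=0$. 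One also checks that $\mathcal{L}(t)$ is genuinely of the form~(\ref{La}) with this $\Phi(t)$: because each $\Phi_\alpha$ is unital and trace preserving one has $\Phi_\alpha^{\#}=\mathbb{I}_d$, so $\frac12\{\Phi^{\#}(t),\rho\}=\bigl(\sum_\alpha\gamma_\alpha(t)\bigr)\rho$ and the generator collapses to~(\ref{gen}). Hence it suffices to prove: (\ref{coCPCOND}) $\iff$ $\Phi(t)$ is completely copositive.

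First I would compute the Choi matrix of the entanglement breaking channel $\Phi_\alpha[X]=\frac{M}{d}\sum_{k=1}^{M}E_{\alpha,k}\Tr(E_{\alpha,k}X)$. Using $C_{\Phi_\alpha}=\frac1d\sum_{i,j}\Phi_\alpha[|i\>\<j|]\otimes|i\>\<j|$ together with $\Tr(E_{\alpha,k}|i\>\<j|)=\<j|E_{\alpha,k}|i\>$ one gets $C_{\Phi_\alpha}=\frac{M}{d^{2}}\sum_{k=1}^{M}E_{\alpha,k}\otimes E_{\alpha,k}^{T}$, whose positivity merely re-expresses complete positivity of $\Phi_\alpha$. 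Next, applying partial transposition on the second factor to $C_{\Phi(t)}=\sum_{\alpha}\gamma_\alpha(t)\,C_{\Phi_\alpha}$ produces the Choi matrix of $\Phi(t)\circ T$,
\[
(\mathrm{id}\otimes T)\bigl[C_{\Phi(t)}\bigr]=\frac{M}{d^{2}}\sum_{\alpha=1}^{N}\gamma_\alpha(t)\sum_{k=1}^{M}E_{\alpha,k}\otimes E_{\alpha,k},
\]
and by Choi's theorem this operator is positive semidefinite exactly when $\Phi(t)$ is completely copositive. Since $M/d^{2}>0$, positivity of this operator is precisely condition~(\ref{coCPCOND}). Combined with the first paragraph, (\ref{coCPCOND}) makes $\Phi(t)$ decomposable, hence $\Lambda(t)$ is D-divisible.

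The hard part here is bookkeeping of conventions rather than any genuine analysis: one must fix a single normalisation of the Choi isomorphism and a single meaning of ``completely copositive'' (transposition on the input versus the output, partial transposition on the first versus the second tensor leg) and verify that the transposes and complex conjugations introduced by $\Tr(E_{\alpha,k}X)$ and by the partial transposition cancel, so that no conjugation survives in~(\ref{coCPCOND}). This uses only that the $\gamma_\alpha(t)$ are real, that the $E_{\alpha,k}$ are Hermitian, and that positivity of a Hermitian operator is invariant under transposition and complex conjugation (equivalently $(\mathrm{id}\otimes T)[C]\ge0\iff(T\otimes\mathrm{id})[C]\ge0$). A secondary point is that the D-divisibility criterion of ref.~\cite{D-div} is really a statement about the propagator $V(t,s)=\Lambda(t)\Lambda^{-1}(s)$ and should be applied, exactly as the CP-divisibility statement above~(\ref{CPCOND}) was, in the regime where $\Lambda(t)$ is invertible; the non-invertible instances are then covered by the usual continuity argument for these symmetric families.
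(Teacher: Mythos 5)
Your proof is correct and follows essentially the same route as the paper: it identifies condition~(\ref{coCPCOND}) with positivity of the Choi matrix of $T\Phi(t)$, i.e.\ complete copositivity of $\Phi(t)=\sum_\alpha\gamma_\alpha(t)\Phi_\alpha$, which makes $\Phi(t)$ decomposable (with $\mathcal{A}(t)=0$) and hence the generated $\Lambda(t)$ D-divisible by the criterion of ref.~\cite{D-div}. You are in fact somewhat more careful than the paper, making explicit both the transposition/conjugation bookkeeping that relates $\sum_k\overline{E}_{\alpha,k}\otimes\overline{E}_{\alpha,k}$ to $\sum_k E_{\alpha,k}\otimes E_{\alpha,k}$ and the reduction of the generator~(\ref{La}) to the form~(\ref{gen}) via $\Phi_\alpha^{\#}[\mathbb{I}_d]=\mathbb{I}_d$.
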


The condition for complete copositivity of $\Phi(t)=\sum_{\alpha=1}^N\gamma_\alpha\Phi_\alpha$ follows directly from the positivity of the Choi matrix
\begin{equation}
C[T\Phi(t)]=(\oper\otimes T\Phi(t))[dP_+]=\sum_{\alpha=1}^N\gamma_\alpha(t)
\sum_{k=1}^M\overline{E}_{\alpha,k}\otimes\overline{E}_{\alpha,k},
\end{equation}
where $P_+=(1/d)\sum_{m,n=1}^d|m\>\<n|\otimes|m\>\<n|$ denotes the maximally entangled state. For comparison, $\Phi(t)$ is completely positive if and only if
\begin{equation}
C[\Phi(t)]=\sum_{\alpha=1}^N\gamma_\alpha(t)
\sum_{k=1}^M\overline{E}_{\alpha,k}\otimes E_{\alpha,k}.
\end{equation}
Hence, if $E_{\alpha,k}=\overline{E}_{\alpha,k}$ for $\gamma_\alpha(t)\neq 0$, then $\Phi(t)$ is both completely positive and completely copositive. Examples include incomplete sets of mutually unbiased bases and symmetric operators constructed from the Gell-Mann matrices, which is analyzed in more details in the next section.

\section{Examples}

\subsection{Mutually unbiased bases}

For $N=d+1$, $M=d$, the symmetric measurements reduce to complete sets of mutually unbiased bases. As mentioned above, they are used to construct the generalized Pauli channels \cite{Ruskai}. The sufficient complete positivity conditions in eqs. (\ref{CPT}) and (\ref{CPT2}) reduce to
\begin{equation}
1\leq \sum_{\beta=1}^{d+1}\lambda_\beta\leq 1+d\min_\alpha\lambda_\alpha
\end{equation}
and
\begin{equation}
\frac{1}{d-1}\left(-1+d^2\max\{0,\lambda_\alpha\}\right)\leq \sum_{\beta=1}^{d+1}\lambda_\beta\leq d+1,
\end{equation}
respectively. Therefore, even in this simple case, they are not equivalent to  the well-known generalized Fujiwara-Algoet conditions \cite{Fujiwara}
\begin{equation}
-\frac{1}{d-1}\leq \sum_{\beta=1}^{d+1}\lambda_\beta\leq 1+d\min_\alpha\lambda_\alpha,
\end{equation}
which are sufficient and necessary for complete positivity of the generalized Pauli channels. However, the CP-divisibility conditions $\gamma_\alpha(t)\geq 0$ for $\Lambda(t)$ following from eq. (\ref{CPCOND}) become sufficient and necessary, where $\gamma_\alpha(t)$ are the decoherence rates of the generator $\mathcal{L}(t)$. This can be easily seen by taking the sum over $\beta\neq\alpha$ in the second inequality, so that
\begin{equation}
(1-\kappa_+)\sum_{\beta=1}^N\gamma_\beta(t)+(N-2)\gamma_\alpha(t)\geq 0,
\end{equation}
and recalling that for MUBs $\kappa_+=1$.

The D-divisibility is in general hard to verify. For a special case with $\gamma_1(t)=\gamma(t)$ and $\gamma_2(t)=\ldots=\gamma_{d+1}(t)=\widetilde{\gamma}(t)$, we use the results for conical designs in eq. (\ref{P2}) to simplify eq. (\ref{coCPCOND}) into
\begin{equation}
\widetilde{\gamma}(t)(\mathbb{I}_{d^2}+\mathbb{F}_d)+
[\gamma(t)-\widetilde{\gamma}(t)]\sum_{k=1}^d|k\>\<k|\otimes|k\>\<k|\geq 0.
\end{equation}
Hence, $\Lambda(t)$ is D-divisible if
\begin{equation}
\widetilde{\gamma}(t)\geq 0,\qquad \gamma(t)\geq-\widetilde{\gamma}(t),
\end{equation}
which is equivalent to the sufficient P-divisibility conditions.

More results can be obtained after fixing the dimension. In $d=3$, the sufficient D-divisibility conditions read as follows,
\begin{equation}\label{ineq}
\sum_{\beta=1}^4\sum_{\alpha\neq\beta}\gamma_\alpha(t)\gamma_\beta(t)\geq 0.
\end{equation}
Note that, contrary to the P- and CP-divisibility constraints, this inequality is non-linear in $\gamma_\alpha(t)$.

\begin{Remark}
Consider a qutrit evolution that admits two positive and two negative coefficients
\begin{equation}
\gamma_1(t)=\gamma_2(t)=\gamma(t)\geq 0,\qquad \gamma_3(t)=\gamma_4(t)=-1.
\end{equation}
Then, eq. (\ref{ineq}) holds for any ordering of the MUBs provided that
\begin{equation}
\gamma(t)\geq\frac{\sqrt{3}+1}{\sqrt{3}-1}.
\end{equation}
This condition is stronger than the sufficient P-divisibility conditions from Proposition 3, which only detect $\gamma(t)\geq 5$.
\end{Remark}

\subsection{Mutually unbiased measurements}

Consider the orthonormal Hermitian basis in $d=3$, consisting in $G_0=\mathbb{I}_3/\sqrt{3}$ and the Gell-Mann matrices
\begin{align*}
G_{1,1}=\frac{1}{\sqrt{2}}\begin{pmatrix}
1 & 0 & 0 \\
0 & -1 & 0 \\
0 & 0 & 0
\end{pmatrix},&\qquad
G_{1,2}=\frac{1}{\sqrt{6}}\begin{pmatrix}
1 & 0 & 0 \\
0 & 1 & 0 \\
0 & 0 & -2
\end{pmatrix},\\
G_{2,1}=\frac{1}{\sqrt{2}}\begin{pmatrix}
0 & 1 & 0 \\
1 & 0 & 0 \\
0 & 0 & 0
\end{pmatrix},&\qquad
G_{2,2}=\frac{1}{\sqrt{2}}\begin{pmatrix}
0 & 0 & 1 \\
0 & 0 & 0 \\
1 & 0 & 0
\end{pmatrix},\\
G_{3,1}=\frac{1}{\sqrt{2}}\begin{pmatrix}
0 & 0 & 0 \\
0 & 0 & 1 \\
0 & 1 & 0
\end{pmatrix},&\qquad
G_{3,2}=\frac{1}{\sqrt{2}}\begin{pmatrix}
0 & 0 & 0 \\
0 & 0 & -i \\
0 & i & 0
\end{pmatrix},\\
G_{4,1}=\frac{1}{\sqrt{2}}\begin{pmatrix}
0 & -i & 0 \\
i & 0 & 0 \\
0 & 0 & 0
\end{pmatrix},&\qquad
G_{4,2}=\frac{1}{\sqrt{2}}\begin{pmatrix}
0 & 0 & -i \\
0 & 0 & 0 \\
i & 0 & 0
\end{pmatrix}.
\end{align*}
This basis can be used to construct non-optimal ($x<x_{\max}=1$) mutually unbiased measurements with $x=5/9$ according to eqs. (\ref{E}) and (\ref{H}) with $M=d$ and $N=d+1$. The corresponding dynamical map $\Lambda(t)$ is CP-divisible if and only if
\begin{equation}\label{CP3}
\left\{
\begin{split}
&\gamma_2(t)+\gamma_3(t)+\gamma_4(t)\geq 0,\\
&2\gamma_3(t)-\gamma_4(t)+5\gamma_2(t)\geq 0,\\
&2\gamma_3(t)-\gamma_2(t)+5\gamma_4(t)\geq 0,\\
&3\gamma_1(t)+\gamma_2(t)-2\gamma_3(t)+\gamma_4(t)\geq 0,\\
&6\gamma_1(t)+2\gamma_2(t)+5\gamma_3(t)+2\gamma_4(t)\pm \sqrt{2[\gamma_2(t)+\gamma_4(t)]^2+\gamma_3^2(t)}\geq 0,
\end{split}
\right.
\end{equation}
which means that, contrary to the MUB case, the sufficient conditions
\begin{equation}
\gamma_\alpha(t)\geq 0\qquad\vee\qquad 9\sum_{\beta=1}^{4}
\gamma_\beta(t)-25\gamma_\alpha(t)\geq 0
\end{equation}
from eq. (\ref{CPCOND}) do not detect all Markovian evolutions. 
Interestingly, if $\gamma_3(t)=[\gamma_2(t)+\gamma_4(t)]/2$, then the necessary and sufficient CP-divisibility conditions simply reduce to $\gamma_\alpha(t)\geq 0$.

Now, $\Lambda(t)$ is D-divisible if
\begin{equation}\label{coCP3}
\left\{
\begin{split}
&\gamma_3(t)=\frac{\gamma_2(t)+\gamma_4(t)}{2},\\
&\gamma_2(t)+\gamma_4(t)\geq 0,\\
&2\gamma_1(t)+\gamma_2(t)+\gamma_4(t)\geq \sqrt{2}|\gamma_2(t)-\gamma_4(t)|.
\end{split}
\right.
\end{equation}
Therefore, $\Lambda(t)$ constructed from a completely copositive $\Phi(t)$ is also CP-divisible if
\begin{equation}
\gamma_3(t)=\frac{\gamma_2(t)+\gamma_4(t)}{2},\qquad \gamma_2(t)\geq 0,\qquad
\gamma_4(t)\geq 0,\qquad 2\gamma_1(t)+\gamma_2(t)+\gamma_4(t)\geq \sqrt{2}|\gamma_2(t)-\gamma_4(t)|.
\end{equation}
This is a more general result than one would initially assume from the formulas for the Choi matrices $C[\Phi(t)]$, $C[T\Phi(t)]$ and the properties of the Gell-Mann matrices $\overline{G}_{\alpha,k}=G_{\alpha,k}$ for $\alpha=1,2$.

\subsection{Optimal $(15,2)$-POVM}

Finally, we analyze the ququart channels that arise from the following rank-2 projectors,
\begin{equation}
E_{\alpha,\pm}=\frac 12 \mathbb{I}_4\pm G_\alpha,\qquad \alpha=1,\ldots,15,
\end{equation}
where $\{G_\alpha\}=\{\sigma_k\otimes\sigma_\ell\}$ consists of tensor products of the normalized Pauli matrices $\sigma_k$. Now, using $E_{\alpha,\pm}$, we construct the completely positive maps
\begin{equation}
\begin{split}
\Phi_\alpha[X]&=\frac 12 \Big[E_{\alpha,+}\Tr(E_{\alpha,+}X)+E_{\alpha,-}\Tr(E_{\alpha,-}X)\Big]
=\frac 14 \mathbb{I}_4\Tr X+G_\alpha\Tr(G_\alpha X),
\end{split}
\end{equation}
and then
\begin{equation}
\Phi(t)=\sum_{\alpha=1}^{15}\gamma_\alpha(t)\Phi_\alpha,\qquad \mathrm{where}\qquad\Phi(t)[G_\alpha]=\gamma_\alpha(t)G_\alpha.
\end{equation}
Hence, the corresponding time-local generator $\mathcal{L}(t)$ is determined by the coefficients $\gamma_\alpha(t)$, which are in this case simply the eigenvalues of $\Phi(t)$ to the eigenvectors $G_\alpha$. It is straightforward to find the diagonal form of the generator
\begin{equation}
\mathcal{L}(t)[X]=\frac 14 \sum_{\alpha=1}^{15}J_\alpha(t)(4G_\alpha XG_\alpha-X)
\end{equation}
with time-dependent decoherence rates $J_\alpha(t)$. From Proposition 3, the sufficient P-divisibility conditions allow for $L=12$ negative $\gamma_\beta(t)$ and $N-L=3$ positive $\gamma_\alpha(t)$ such that $\gamma_\alpha(t)\leq 13\max_{\beta=1,\ldots,12}|\gamma_\beta(t)|$. In particular, if we take $\gamma_\beta(t)=-1$ and $\gamma_\alpha(t)=13$, then the time-local generator admits two negative decoherence rates. However, there does not seem to be a straightforward correlation between $L$ and the number of negative rates. For example, $L=6$ allows for six negative $J_\alpha(t)$, whereas $L=7$ can result in a CP-divisible generator with only six non-zero $J_\alpha(t)$.

\section{Conclusions}

In this paper, we expanded on the concept of generalized Pauli channels by constructing quantum maps using symmetric measurements rather than mutually unbiased bases. This allowed us to find a wide class of qudit dynamics, whose properties -- like complete positivity and degrees of divisibility -- are still relatively easy to control. It is important to note that our results are general; i.e., they do not depend on the explicit choice of symmetric measurements. The more parameters we fix before performing derivations, the more restrictive conditions we obtain, which is illustrated in the low-dimensional examples of qubit, qutrit, and ququart evolution. Interestingly, our results suggest that completely copositive maps can be more efficient in detecting weakly Markovian dynamics than the known positive maps.

In further works, it would be beneficial to establish less restrictive positivity and complete positivity criteria of dynamical maps. This would by extension allow for more reliable detection methods of both Markovian and non-Markovian quantum evolution. Another interesting aspect would be to find a procedure of constructing the diagonal form of the corresponding time-local generator. Partial results for special classes of symmetric measurements, other than mutually unbiased bases, would also be desirable, Finally, the notion of D-divisibile dynamical maps, which lie between P-divisible and CP-divisible maps, requires further analysis. One thing to consider is whether there are any advantages of quantum maps that are both complete positive and complete copositive, which can be viewed as a dynamical analogue of PPT entangled states.

\section*{Acknowledgements}

This research was funded in whole or in part by the National Science Centre, Poland, Grant number 2021/43/D/ST2/00102. For the purpose of Open Access, the author has applied a CC-BY public copyright licence to any Author Accepted Manuscript (AAM) version arising from this submission.

\bibliography{C:/Users/cyndaquilka/OneDrive/Fizyka/bibliography}
\bibliographystyle{C:/Users/cyndaquilka/OneDrive/Fizyka/beztytulow2}

\appendix

\section{Conical 2-designs}

The concept of conical designs was introduced by Appleby and Graydon \cite{Graydon,Graydon2} as a generalization of complex projective designs to positive operators. From definition, a conical 2-design is a family of positive operators $A_j$ such that $\sum_jA_j\otimes A_j$ commutes with $U\otimes U$ for any unitary operator $U$.This property is equivalent to 
\begin{equation}
\sum_jA_j\otimes A_j=
\kappa_+\mathbb{I}_d\otimes\mathbb{I}_d+\kappa_-\mathbb{F}_d
\end{equation}
for $\kappa_+\geq\kappa_->0$ with $\mathbb{F}_d=\sum_{m,n=1}^d|m\>\<n|\otimes|n\>\<m|$ being the flip operator \cite{Graydon}. 
Recently, it has been shown that informationally complete symmetric measurements are conical 2-designs  \cite{SICMUB_design}. The proof uses elaborate results derived from mutually unbiased measurements \cite{Wang}. Here, we show that informationally complete $(N,M)$-POVMs being conical 2-designs follows directly from the properties of $\Phi_\alpha$.

\begin{Lemma}
For the family of completely positive, trace-preserving maps
\begin{equation}
\Phi_\alpha[X]=\frac Md \sum_{k=1}^ME_{\alpha,k}\Tr(E_{\alpha,k}X)
\end{equation}
constructed from an informationally complete $(N,M)$-POVM, it holds that
\begin{equation}\label{kap}
\sum_{\alpha=1}^N\Phi_\alpha=\frac Md (\kappa_+d\Phi_0+\kappa_-\oper)
\end{equation}
with the coefficients
\begin{equation}\label{kappas}
\kappa_+=\frac Md [y+(N-1)z]=\frac{d^3-xM^2}{dM(M-1)},\qquad \kappa_-=x-y=\frac{xM^2-d}{M(M-1)}.
\end{equation}
\end{Lemma}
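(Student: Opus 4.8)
The plan is to verify the superoperator identity~(\ref{kap}) by evaluating both sides on a basis of the operator space. Informational completeness, $N=(d^2-1)/(M-1)$, gives $N(M-1)+1=d^2$, and the trace relation $\Tr(U_{\alpha,k}U_{\beta,\ell}^\dagger)=M(x-y)\delta_{\alpha\beta}\delta_{k\ell}$ together with $\Tr(U_{\alpha,k})=(d/M)\sum_{\ell=1}^M\omega^{k\ell}=0$ for $k=1,\dots,M-1$ show that the $N(M-1)$ operators $U_{\alpha,k}$ are mutually orthogonal, nonzero, and orthogonal to $\mathbb{I}_d$. Hence $\{\mathbb{I}_d\}\cup\{U_{\alpha,k}\}$ is a basis of the $d^2$-dimensional operator space, and it suffices to check that both sides of~(\ref{kap}) agree on $\mathbb{I}_d$ and on each $U_{\beta,\ell}$.

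First I would record two facts about $\Phi_\alpha$. Since $\{E_{\alpha,k}\}_{k}$ is a POVM with $\Tr E_{\alpha,k}=d/M$, we have $\Phi_\alpha[\mathbb{I}_d]=\frac Md\sum_{k=1}^M E_{\alpha,k}\cdot\frac dM=\sum_{k=1}^M E_{\alpha,k}=\mathbb{I}_d$. For the action on $U_{\beta,\ell}$, substitute $U_{\beta,\ell}=\sum_{m=1}^M\omega^{\ell m}E_{\beta,m}$ and apply the symmetry constraints~(\ref{M}): when $\alpha\neq\beta$ every trace $\Tr(E_{\alpha,k}E_{\beta,m})$ equals the constant $z$, so $\Tr(E_{\alpha,k}U_{\beta,\ell})=z\sum_m\omega^{\ell m}=0$; when $\alpha=\beta$ one gets $\Tr(E_{\alpha,k}U_{\alpha,\ell})=\omega^{\ell k}x+y\sum_{m\neq k}\omega^{\ell m}=\omega^{\ell k}(x-y)$. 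Therefore $\Phi_\alpha[U_{\beta,\ell}]=\frac Md(x-y)\delta_{\alpha\beta}U_{\beta,\ell}$, the eigenvalue equation already used in Section~3. This step is where the full set of constraints in~(\ref{M}) is needed; everything else is bookkeeping.

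It then remains to evaluate. On $U_{\beta,\ell}$ the left-hand side of~(\ref{kap}) is $\sum_{\alpha=1}^N\Phi_\alpha[U_{\beta,\ell}]=\frac Md(x-y)U_{\beta,\ell}$, while the right-hand side is $\frac Md\kappa_- U_{\beta,\ell}$ since $\Phi_0[U_{\beta,\ell}]=\frac{\mathbb{I}_d}{d}\Tr U_{\beta,\ell}=0$; these coincide because $\kappa_-=x-y$. On $\mathbb{I}_d$ the left-hand side is $\sum_{\alpha=1}^N\Phi_\alpha[\mathbb{I}_d]=N\mathbb{I}_d$ and the right-hand side is $\frac Md(d\kappa_++\kappa_-)\mathbb{I}_d$, so one needs the scalar identity $N=\frac Md(d\kappa_++\kappa_-)$. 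Using the closed forms, $d\kappa_++\kappa_-=\frac{d^3-xM^2}{M(M-1)}+\frac{xM^2-d}{M(M-1)}=\frac{d(d^2-1)}{M(M-1)}$, whence $\frac Md(d\kappa_++\kappa_-)=\frac{d^2-1}{M-1}=N$, again by informational completeness. The only genuinely computational point will be to confirm that the two displayed expressions for each coefficient are consistent, i.e. that $\frac Md[y+(N-1)z]=\frac{d^3-xM^2}{dM(M-1)}$ and $x-y=\frac{xM^2-d}{M(M-1)}$, which follows by substituting $y=(d-Mx)/[M(M-1)]$, $z=d/M^2$, and $N=(d^2-1)/(M-1)$; the positivity and ordering $\kappa_+\geq\kappa_->0$ then follow from the admissible range~(\ref{x}) of $x$. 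No step should present a real obstacle: the argument is a basis check, with the informational-completeness count $N(M-1)+1=d^2$ doing the essential structural work.
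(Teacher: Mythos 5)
Your proof is correct, but it follows a genuinely different route from the paper's. The paper works in the ``dual'' direction: for an arbitrary Hermitian $X$ it computes $\sum_{\alpha,k}\Tr(E_{\alpha,k}E_{\beta,\ell})\Tr(E_{\alpha,k}X)$ directly from the symmetry constraints (\ref{M}), obtains $\kappa_-\Tr(E_{\beta,\ell}X)+\frac dM\kappa_+\Tr X$, and then concludes the operator identity because the trace of the difference against every $E_{\beta,\ell}$ vanishes and an informationally complete $(N,M)$-POVM spans the operator space. You instead diagonalize: both sides of (\ref{kap}) are simultaneously diagonalized by $\{\mathbb{I}_d\}\cup\{U_{\alpha,k}\}$, which informational completeness ($N(M-1)+1=d^2$) together with the orthogonality relations promotes to a basis, so it suffices to match eigenvalues; the only nontrivial inputs are $\Phi_\alpha[\mathbb{I}_d]=\mathbb{I}_d$, the eigenvalue equation $\Phi_\alpha[U_{\beta,\ell}]=\frac Md(x-y)\delta_{\alpha\beta}U_{\beta,\ell}$ (which you rederive from (\ref{M}), so nothing unproven is imported), and the scalar identity $N=\frac Md(d\kappa_++\kappa_-)$. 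The two arguments are essentially Fourier transforms of one another within each POVM, and both hinge on informational completeness supplying $d^2$ spanning operators; your version buys a cleaner conceptual picture (compare eigenvalues on a common eigenbasis) and makes transparent why $\kappa_-$ is exactly the nontrivial eigenvalue $x-y$, while the paper's version stays entirely at the level of the raw measurement operators and avoids any discussion of the spectral structure of the $\Phi_\alpha$. Both correctly identify where $\kappa_+\geq\kappa_->0$ comes from, namely the admissible range (\ref{x}) of $x$.
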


\begin{proof}
We start by multiplying $\Phi_\alpha[X]$ by $E_{\beta,\ell}$ and taking the trace. Using the properties of $E_{\beta,\ell}$, we arrive at
\begin{equation}
\begin{split}
\frac dM \sum_{\alpha=1}^N\Tr(\Phi_\alpha[X]E_{\beta,\ell})&=\sum_{\alpha=1}^N
\sum_{k=1}^M\Tr(E_{\alpha,k}E_{\beta,\ell})\Tr(E_{\alpha,k}X)
=x\Tr(E_{\beta,\ell}X)+\sum_{k\neq\ell}y\Tr(E_{\beta,k}X)
+\sum_{\alpha\neq\beta}\sum_{k=1}^Mz\Tr(E_{\alpha,k}X)\\
&=(x-y)\Tr(E_{\beta,\ell}X)+[y+(N-1)z]\Tr X
=\kappa_-\Tr(E_{\beta,\ell}X)+\frac dM \kappa_+\Tr X.
\end{split}
\end{equation}
Equivalently, this can be rewritten into the condition
\begin{equation}\label{trace}
\Tr\left\{\left(\frac dM\sum_{\alpha=1}^N\Phi_\alpha[X]-
\Big[\kappa_-X+\frac dM \kappa_+\mathbb{I}_d\Tr(X)\Big]
\right)E_{\beta,\ell}\right\}=0
\end{equation}
that holds for any Hermitian $X$ and $E_{\beta,\ell}$. Therefore, the operator in the round brackets is equal to zero.
\end{proof}

Using the above results, it is straightforward to prove that $(N,M)$-POVMs are conical 2-designs.

\begin{Theorem}
Any informationally complete $(N,M)$-POVM is a conical 2-design with
\begin{equation}\label{P2}
\sum_{\alpha=1}^N\sum_{k=1}^ME_{\alpha,k}\otimes E_{\alpha,k}
=\kappa_+\mathbb{I}_d\otimes\mathbb{I}_d + \kappa_- \mathbb{F}_d,
\end{equation}
where $\kappa_{\pm}$ are given by eq. (\ref{kappas}).
\end{Theorem}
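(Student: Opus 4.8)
The plan is to read off the operator identity in eq.~(\ref{P2}) from Lemma~1 by pairing both sides with arbitrary product operators $A\otimes B$ and then invoking the fact that such products span all of $\mathcal{B}(\mathcal{H})\otimes\mathcal{B}(\mathcal{H})$. First I would note that, for any linear operators $A,B$ on $\mathcal{H}$,
\begin{equation}
\Tr\!\left[(A\otimes B)\sum_{k=1}^ME_{\alpha,k}\otimes E_{\alpha,k}\right]
=\sum_{k=1}^M\Tr(AE_{\alpha,k})\Tr(BE_{\alpha,k})
=\frac dM\,\Tr\!\big(A\,\Phi_\alpha[B]\big),
\end{equation}
the last equality being just the definition of $\Phi_\alpha$ re-expressed through the Hilbert--Schmidt pairing (each $\Phi_\alpha$ is self-dual). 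Summing over $\alpha=1,\ldots,N$ and inserting Lemma~1, the right-hand side becomes $\Tr\!\big(A\,(\kappa_+ d\,\Phi_0+\kappa_-\oper)[B]\big)=\kappa_+\Tr(A)\Tr(B)+\kappa_-\Tr(AB)$, using $\Phi_0[B]=\mathbb{I}_d\Tr(B)/d$.

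Next I would match the two scalars $\Tr(A)\Tr(B)$ and $\Tr(AB)$ against the candidate right-hand side of eq.~(\ref{P2}): one has trivially $\Tr[(A\otimes B)(\mathbb{I}_d\otimes\mathbb{I}_d)]=\Tr(A)\Tr(B)$, and the elementary swap identity gives $\Tr[(A\otimes B)\mathbb{F}_d]=\Tr(AB)$. Hence
\begin{equation}
\Tr\!\left[(A\otimes B)\left(\sum_{\alpha=1}^N\sum_{k=1}^ME_{\alpha,k}\otimes E_{\alpha,k}-\kappa_+\mathbb{I}_d\otimes\mathbb{I}_d-\kappa_-\mathbb{F}_d\right)\right]=0
\end{equation}
for every $A,B$. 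Since the operators $A\otimes B$ span the space of all linear operators on $\mathcal{H}\otimes\mathcal{H}$, the bracketed operator vanishes, which is precisely eq.~(\ref{P2}); the coefficients $\kappa_\pm$ are inherited verbatim from Lemma~1, and $\kappa_+\ge\kappa_->0$ follows from the admissible range of $x$ in eq.~(\ref{x}), so the result indeed has the Appleby--Graydon form of a conical $2$-design.

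The step I expect to require the most care is the bookkeeping of complex conjugation. A tempting alternative is to read eq.~(\ref{P2}) directly off a Choi matrix of $\sum_\alpha\Phi_\alpha$, but that route naturally lands on $\sum_{\alpha,k}\overline{E}_{\alpha,k}\otimes E_{\alpha,k}$ paired with $dP_+$ rather than with $\mathbb{F}_d$; recovering the stated form then forces one to transpose the second tensor factor (using $E_{\alpha,k}^{T}=\overline{E}_{\alpha,k}$ for Hermitian operators) and conjugate the whole identity (using that $\mathbb{I}_d\otimes\mathbb{I}_d$ and $\mathbb{F}_d$ have real entries). Testing against product operators sidesteps this entirely, so once Lemma~1 is in hand the argument is short.
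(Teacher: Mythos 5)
Your proof is correct, and it rests on the same key ingredient as the paper's proof, namely Lemma~1 (the superoperator identity $\sum_{\alpha}\Phi_\alpha=\frac Md(\kappa_+d\Phi_0+\kappa_-\oper)$), but it converts that identity into the operator identity of eq.~(\ref{P2}) by a different mechanism. The paper applies the Choi--Jamio{\l}kowski isomorphism to the sum of the $\Phi_\alpha$ \emph{composed with transposition}, evaluating $C(\Phi T)=\frac dM\sum_\alpha(\oper_d\otimes\Phi_\alpha T)[dP_+]$ in two ways; the composition with $T$ is precisely what makes the Choi matrix come out as $\sum_{\alpha,k}E_{\alpha,k}\otimes E_{\alpha,k}$ rather than $\sum_{\alpha,k}\overline{E}_{\alpha,k}\otimes E_{\alpha,k}$, so the conjugation subtlety you flag at the end is real and is handled in the paper exactly by that choice. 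Your route instead pairs both sides with product operators $A\otimes B$, uses $\sum_k\Tr(AE_{\alpha,k})\Tr(BE_{\alpha,k})=\frac dM\Tr(A\,\Phi_\alpha[B])$ together with $\Tr[(A\otimes B)\mathbb{F}_d]=\Tr(AB)$ and $\Tr[(A\otimes B)(\mathbb{I}_d\otimes\mathbb{I}_d)]=\Tr(A)\Tr(B)$, and concludes by nondegeneracy of the trace pairing and the fact that simple tensors span $\mathcal{B}(\mathcal{H}\otimes\mathcal{H})$. Both arguments are equally rigorous and of comparable length; yours sidesteps all transposition bookkeeping, while the paper's makes the connection to the Choi matrices $C[\Phi(t)]$ and $C[T\Phi(t)]$ used later in the D-divisibility discussion more transparent. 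Your closing step, that $\kappa_+\geq\kappa_->0$ follows from the admissible range of $x$ in eq.~(\ref{x}), matches the paper's final observation and is needed to conclude that the identity is genuinely of the Appleby--Graydon conical $2$-design form.
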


\begin{proof}
From the Choi-Jamio{\l}kowski isomorphism, we construct the Choi map
\begin{equation}
C(\Phi T)=\frac dM \sum_{\alpha=1}^N(\oper_d\otimes \Phi_\alpha T)[dP_+],
\end{equation}
where $P_+=(1/d)\sum_{m,n=1}^d|m\>\<n|\otimes|m\>\<n|$ is the maximally entangled state and $T$ denotes the transposition. Direct calculations lead to
\begin{equation}
C(\Phi T)=\frac dM \sum_{\alpha=1}^N\sum_{m,n=1}^d|m\>\<n|\otimes\Phi_\alpha[|n\>\<m|]
=\sum_{\alpha=1}^N\sum_{k=1}^M\sum_{m,n=1}^d
|m\>\<m|E_{\alpha,k}|n\>\<n|\otimes E_{\alpha,k}=
\sum_{\alpha=1}^N\sum_{k=1}^ME_{\alpha,k}\otimes E_{\alpha,k}.
\end{equation}
On the other hand, using the results of Lemma 1, we have
\begin{equation}
C(\Phi T)=\left(\oper_d\otimes\Big[\kappa_+d\Phi_0+\kappa_-T\Big]\right)[dP_+]
=\sum_{m,n=1}^d|m\>\<n|\otimes\Big[\kappa_+\mathbb{I}_d\delta_{mn}+\kappa_-|n\>\<m|\Big]
=\kappa_+\mathbb{I}_d\otimes\mathbb{I}_d+\kappa_-\mathbb{F}_d.
\end{equation}
Finally, observe that $\kappa_+\geq\kappa_->0$, which follows from the upper and lower bounds on the parameter $x$ in eq. (\ref{x}).
\end{proof}

\section{CP-divisibility conditions}

In Section 3, we have found two families of quantum channels constructed from symmetric measurements. Therefore, in deriving sufficient conditions for complete positivity of $\Phi(t)$, we need to consider two seperate cases. If $\Phi(t)=\sum_{\alpha=1}^N\gamma_\alpha(t)\Phi_\alpha$, then it is completely positive if $\gamma_\alpha(t)\geq 0$. On the other hand, if we construct $\Phi(t)$ as a combination of $\Psi_\alpha$, then
\begin{equation}
\Psi(t)=\sum_{\alpha=1}^N\zeta_\alpha(t)\Psi_\alpha=\frac{1}{(M-1)\kappa_+}
\sum_{\alpha=1}^N\Phi_\alpha\left[\sum_{\beta=1}^N\zeta_\beta(t)-\kappa_+\zeta_\alpha(t)\right]
-\frac{M(x-y)}{(M-1)d\kappa_+}\sum_{\beta=1}^N\zeta_\beta(t)\oper.
\end{equation}
A gauge transformation allows us to eliminate the term proportional to $\oper$, so that
\begin{equation}\label{zeta}
\Psi^\prime(t)=\sum_{\alpha=1}^N\gamma_\alpha(t)\Phi_\alpha,\qquad \gamma_\alpha(t)=\frac{1}{(M-1)\kappa_+}
\left[\sum_{\beta=1}^N\zeta_\beta(t)-\kappa_+\zeta_\alpha(t)\right].
\end{equation}
The inverse relation reads
\begin{equation}\label{zeta2}
\zeta_\alpha(t)=\frac{M-1}{N-\kappa_+}\left[\sum_{\beta=1}^N\gamma_\beta(t)
-(N-\kappa_+)\gamma_\alpha(t)\right].
\end{equation}
The complete positivity conditions follow from $\zeta_\alpha(t)\geq 0$, which is equivalent to the second inequality in Proposition 3.

\section{P-divisibility conditions}

First, using formula (\ref{kap}), we rewrite the positive map in eq. (\ref{Pmap}) into
\begin{equation}
\Phi(t)=\eta_0(t)\oper+\sum_{\alpha=L+1}^N\eta_\alpha(t)\Phi_\alpha
-\sum_{\alpha=1}^L\eta_\alpha(t)\Phi_\alpha,
\end{equation}
where for simplicity we introduce the coefficients
\begin{equation}
\begin{split}
\eta_0(t)&=-\frac{\mu_0(t)\kappa_-}{M\kappa_+}(b-N+2L),\\
\eta_\alpha(t)&=\mu_\alpha(t)-\frac{\mu_0(t)}{M\kappa_+}(b-N+2L),\quad \alpha=1,\ldots,L,\\
\eta_\alpha(t)&=\mu_\alpha(t)+\frac{\mu_0(t)}{M\kappa_+}(b-N+2L),\quad \alpha=L+1,\ldots,N.
\end{split}
\end{equation}
Now, observe that $\Phi_\alpha$ are self-dual in the sense that $\Phi_\alpha^{\#}=\Phi_\alpha$, where a dual map is defined via the trace relation $\Tr(X\Phi[Y])=\Tr(\Phi^{\#}[X]Y)$. Therefore, $\Phi(t)$ is self-dual, as well, which allows us to calculate
\begin{equation}
\Phi^{\#}(t)[\mathbb{I}]=\left[\mu_0(t)\left(1+\frac{N}{M\kappa_+}(b-N+2L)\right)
+\sum_{\alpha=L+1}^N\mu_\alpha(t)-\sum_{\alpha=1}^L\mu_\alpha(t)\right]\mathbb{I}.
\end{equation}
After comparing the two formulas for the generator in eqs. (\ref{gen}) and (\ref{La}), we identify the generator coefficients
\begin{equation}
\begin{split}
\gamma_\alpha(t)&=-\mu_\alpha(t)+\frac{\mu_0(t)}{M\kappa_+}(b-N+2L),\qquad \alpha=1,\ldots,L,\\
\gamma_\alpha(t)&=\mu_\alpha(t)+\frac{\mu_0(t)}{M\kappa_+}(b-N+2L),\qquad \alpha=L+1,\ldots,N.
\end{split}
\end{equation}
Recall that $\Lambda(t)$ is P-divisible if $\Phi(t)$ that defines the generator $\mathcal{L}(t)$ is positive. However, the positivity of $\Phi(t)$ depends on the range of $L$, which means we have to consider two separate cases.

If $L\geq (N-b)/2$, then $\Phi(t)$ is positive if
\begin{equation}\label{CC}
\mu_0(t)\geq\max_{\beta=1,\ldots,L}\mu_\beta(t),\qquad
\mu_\alpha(t)\geq\max_{\beta=1,\ldots,L}\mu_\beta(t),\qquad\alpha=L+1,\ldots,N.
\end{equation}
Interestingly, $\mu_0(t)$ is an extra parameter that is not fully determined by $\gamma_\alpha(t)$. Therefore, to obtain the least restrictive conditions, we choose the smallest value
\begin{equation}\label{mu0}
\mu_0(t)=\max_{\beta=1,\ldots,L}\mu_\beta(t)
=\frac{M\kappa_+}{M\kappa_+-b+N-2L}\max_{\beta=1,\ldots,L}|\gamma_\beta(t)|.
\end{equation}
Simple calculations show that the second condition in eq. (\ref{CC}) is now equivalent to
\begin{equation}
\bigforall_{\alpha=L+1,\ldots,N}\quad\gamma_\alpha(t)\geq
\frac{M\kappa_++(b-N+2L)}{M\kappa_+-(b-N+2L)}\max_{\beta=1,\ldots,L}|\gamma_\beta(t)|.
\end{equation}
The upper limit for the number $L$ of negative coefficients follows from the requirement that $\mu_0(t)\geq 0$, which reduces to
\begin{equation}
M\kappa_+-b+N-2L>0.
\end{equation}
Note that $M\kappa_+-b=N-M\kappa_-<N$, and therefore one always has $L<N$.

On the other hand, if $L<(N-b)/2$, then $\Phi(t)$ is positive if
\begin{equation}\label{CC2}
\mu_0(t)\leq\min_{\alpha=L+1,\ldots,N}\mu_\alpha(t),\qquad
\mu_\alpha(t)\geq\max_{\beta=1,\ldots,L}\mu_\beta(t),\qquad\alpha=L+1,\ldots,N.
\end{equation}
Again, we can eliminate $\mu_0(t)$ by taking its minimal value, which this time is $\mu_0(t)=0$. Hence, the above inequalities reduce to
\begin{equation}
\bigforall_{\alpha=L+1,\ldots,N}\quad\gamma_\alpha(t)\geq
\max_{\beta=1,\ldots,L}|\gamma_\beta(t)|.
\end{equation}


\end{document}